\newtheorem{proposition}{Proposition}
\newtheorem{lemma}{Lemma}
\begin{document}

\title{Numerical Techniques for Finding the Distances of Quantum Codes}
\date{\today}
\author{\IEEEauthorblockN{Ilya
Dumer\IEEEauthorrefmark{1}, Alexey A. Kovalev\IEEEauthorrefmark{2},
and Leonid P.~Pryadko\IEEEauthorrefmark{3}}
\IEEEauthorblockA{\IEEEauthorrefmark{1} Department of Electrical
Engineering, University of California, Riverside, USA
(e-mail: dumer@ee.ucr.edu)}
\IEEEauthorblockA{\IEEEauthorrefmark{2} Department of Physics,
University of Nebraska at Lincoln, USA (e-mail: alexey.kovalev@unl.edu)}
\IEEEauthorblockA{\IEEEauthorrefmark{3} Department of
Physics \& Astronomy, University of California, Riverside, USA
(e-mail: leonid@landau.ucr.edu)}}
\maketitle

\begin{abstract}
We survey the existing techniques for calculating code distances of  classical
codes and apply these techniques to generic quantum  codes. For classical and
quantum LDPC codes, we also present a new  linked-cluster technique. It
reduces complexity exponent of all  existing deterministic techniques designed
for codes with small  relative distances (which include all known families of
quantum LDPC  codes), and also surpasses the probabilistic technique for
sufficiently high code rates.

\end{abstract}

\section{Introduction}

Quantum error correction (QEC)
\cite{shor-error-correct,Knill-Laflamme-1997,Bennett-1996} is a critical part
of quantum computing due to fragility of quantum states. To date, surface
(toric) quantum codes \cite{kitaev-anyons,Dennis-Kitaev-Landahl-Preskill-2002}
and related topological color
codes\cite{Bombin-MartinDelgado-2006,Bombin-2007, Bombin-MartinDelgado-2007}
have emerged as prime contenders
\cite{Raussendorf-Harrington-2007,Katzgraber-Bombin-MartinDelgado-2009} in
efficient quantum design due to two important advantages. Firstly, they only
require simple local gates for quantum syndrome measurements, and secondly,
they efficiently correct errors below a threshold of about 1\% per gate.
Unfortunately, the locality also limits\cite{Bravyi-Poulin-Terhal-2010} such
codes to an asymptotically zero rate $k/n$. This would make a useful quantum
computer prohibitively large. Therefore, there is much interest in designing
of feasible quantum codes with no locality restrictions.

A more general class of codes is quantum low-density-parity-check (LDPC) codes
\cite{Postol-2001,MacKay-Mitchison-McFadden-2004}. These codes assume no
locality but only require that stabilizer generators (parity checks) have low
weight. Unlike surface or color codes, quantum LDPC codes can have a finite
rate $k/n$. Also, long LDPC codes have a nonzero error probability threshold,
both in the standard setting when a syndrome is measured exactly, and in a
fault-tolerant setting, when syndrome measurements include
errors\cite{Kovalev-Pryadko-FT-2013}. This non-zero error threshold is even
more noteworthy given that known quantum LDPC codes have distances $d$ scaling
as a square root of $n$ unlike linear scaling in the classical LDPC
codes\cite{Tillich-Zemor-2009,Kovalev-Pryadko-2012,
Kovalev-Pryadko-Hyperbicycle-2013,Andriyanova-Maurice-Tillich-2012}. LDPC
codes can have finite rate and linear distance\cite{Bravyi-Hastings-2013} if
weights of stabilizer generators scale as a square root of $n$.
An important open problem is to find the bounds on distance $d$ of quantum
LDPC codes with limited-weight stabilizer generators.

This paper addresses numerical algorithms for finding distances of quantum and
classical LDPC codes. To make a valid comparison, we first survey several
existing classical algorithms that were used before for generic random codes
meeting the Gilbert-Varshamov (GV) bound. Here we re-apply these techniques to
find the distances of quantum codes. Then we turn to the new techniques that
are specific for LDPC codes. Note that most error patterns for such codes form
small clusters that affect disjoint sets of stabilizer
generators~\cite{Kovalev-Pryadko-FT-2013}. While some errors can have huge
weight, they can be always detected if the size of each cluster is below the
code distance $d$. We then design an algorithm that verifies code distance by
checking the error patterns that correspond to the connected error clusters.
For any error weight $w\ll n$, such clusters form an exponentially small
fraction of generic errors of the same weight. Therefore, we consider the
worst-case scenario that holds for any LDPC code and can be applied in quantum
setting. This cluster-based algorithm exponentially reduces the complexity of
the known deterministic techniques for sufficiently small relative distance,
which is the case for all known families of weight-limited quantum LDPC codes.
The new algorithm also outperforms probabilistic techniques for high-rate
codes with small relative distance.

\section{Background}

Let $\mathcal{C}[n,k]_{q}$ be a linear $q$-ary code of length $n$ and
dimension $k$ in the vector space $\mathbb{F}_{q}^{n}$ over the field
$\mathbb{F}_{q}$. This code is uniquely specified by the parity check matrix
$H$, namely $\mathcal{C}=\{\mathbf{c}\in\mathbb{F}_{q}^{n}|H\mathbf{c}=0\}$.
Let $d$ denote the Hamming distance of code $\mathcal{C}$.

A quantum $[[n,k]]$ (qubit) stabilizer code $\mathcal{Q}$ is a $2^{k}%
$-dimensional subspace of the $n$-qubit Hilbert space $\mathbb{H}_{2}^{\otimes
n}$, a common $+1$ eigenspace of all operators in an Abelian \emph{stabilizer
group} $\mathscr{S}\subset\mathscr{P}_{n}$, $-\openone\not \in \mathscr{S}$,
where the $n$-qubit Pauli group $\mathscr{P}_{n}$ is generated by tensor
products of the $X$ and $Z$ single-qubit Pauli operators. The stabilizer is
typically specified in terms of its generators, $\mathscr{S}=\left\langle
S_{1},\ldots,S_{n-k}\right\rangle $; measuring the generators $S_{i}$ produces
the \emph{syndrome} vector. The weight of a Pauli operator is the number of
qubits it affects. The distance $d$ of a quantum code is the minimum weight of
an operator $U$ which commutes with all operators from the stabilizer
$\mathscr{S}$, but is not a part of the stabilizer, $U\not \in \mathscr{S}$.

A Pauli operator $U\equiv i^{m}X^{\mathbf{v}}Z^{\mathbf{u}}$, where
$\mathbf{v},\mathbf{u}\in\{0,1\}^{\otimes n}$ and $X^{\mathbf{v}}=X_{1}%
^{v_{1}}X_{2}^{v_{2}}\ldots X_{n}^{v_{n}}$, $Z^{\mathbf{u}}=Z_{1}^{u_{1}}%
Z_{2}^{u_{2}}\ldots Z_{n}^{u_{n}}$, can be mapped, up to a phase, to a
quaternary vector, $\mathbf{e}\equiv\mathbf{u}+\omega\mathbf{v}$, where
$\omega^{2}\equiv\overline{\omega}\equiv\omega+1$. A product of two quantum
operators corresponds to a sum ($\bmod\,2$) of the corresponding vectors. Two
Pauli operators commute if and only if the \emph{trace inner product}
$\mathbf{e}_{1}\ast\mathbf{e}_{2}\equiv\mathbf{e}_{1}\cdot\overline
{\mathbf{e}}_{2}+\overline{\mathbf{e}}_{1}\cdot\mathbf{e}_{2}$ of the
corresponding vectors is zero, where $\overline{\mathbf{e}}\equiv
\mathbf{u}+\overline{\omega}\mathbf{v}$. With this map, generators of a
stabilizer group are mapped to the rows of a parity check matrix $H$ of an
\emph{additive} code over $\mathbb{F}_{4}$, with the condition that any two
rows yield a nil trace inner product \cite{Calderbank-1997}. The vectors
generated by rows of $H$ correspond to stabilizer generators that act
trivially on the code; these vectors form the \emph{degeneracy group} and are
omitted from the distance calculation.

An LDPC code, quantum or classical, is a code with a sparse parity check
matrix. For a \emph{regular} $(j,\ell)$ LDPC code, every column and every row
of $H$ have weights $j$ and $\ell$ respectively, while for a $(j,\ell
)$-limited LDPC code these weights are limited from above by $j$ and $\ell$. A
huge advantage of classical LDPC codes is that they can be decoded in linear
time using belief propagation (BP) and the related iterative
methods\cite{Gallager-1962,MacKay-book-2003}. Unfortunately, this is not
necessarily the case for quantum LDPC codes, which have many short loops of
length $4$ in their Tanner graphs. In turn, these loops cause a drastic
deterioration in the convergence of the BP algorithm\cite{Poulin-Chung-2008}.
This problem can be circumvented with specially designed quantum
codes\cite{Kasai-Hagiwara-Imai-Sakaniwa-2012,
Andriyanova-Maurice-Tillich-2012}, but a general solution is not known. One
alternative that has polynomial complexity in $n$ and approaches linear
complexity for very low error rates is the cluster-based decoding
of~\cite{Kovalev-Pryadko-FT-2013}.

\section{Generic techniques for distance calculation}

\label{sec:generic} The problem of verifying the distance of a linear code
(finding a minimum-weight codeword) is related to the decoding problem: find
an error of minimum weight that gives the same syndrome as the received
codeword. The number of required operations $N$ usually scales as an exponent
$N\propto q^{Fn}$ in blocklength $n$, and we characterize the complexity by
the exponent $F=\overline{\lim}$ ($\log_{q}N)/n$ as $n\rightarrow\infty$. For
example, for a linear $q$-ary code with $k$ information qubits, inspection of
all $q^{k}$ distinct codewords has (time) complexity exponent $F=R$, where
$R=k/n$ is the code rate. Given substantially large memory, one can instead
consider the syndrome table that stores the list of all $q^{n-k}$ syndromes
and coset leaders. This setting gives (space) complexity $F=1-R$.

\subsection{Sliding window (SW) technique}

\label{sec:sliding} This technique has been proposed in
Ref.~\cite{Evseev-1983} for correction of binary errors and generalized in
Ref.~\cite{Dumer-1996} for soft-decision decoding (where more reliable
positions have higher error costs). A related technique has also been
considered in Refs.~\cite{Zimmermann-1996,Grassl-2006}. The following
proposition addresses this technique for quantum codes. Let $H_{q}%
(x)=x\log_{q}(q-1)-x\log_{q}x-(1-x)\log_{q}(1-x)$ be the $q$-ary entropy
function. Below we consider both generic stabilizer codes and those that meet
the quantum GV bound
\begin{equation}
R=1-2H_{4}(\delta) \label{eq:GV}%
\end{equation}

\begin{proposition}
Code distance $\delta n$ of a random quantum stabilizer code  $[[n,Rn]]$ can
be found with complexity exponent
\begin{equation}
F_{Aq}=(1+R)H_{4}(\delta) \label{1}%
\end{equation}
For random stabilizer codes that meet the GV bound (\ref{eq:GV}),
\begin{equation}
F_{Aq}^{\ast}=(1-R^{2})/2 \label{eq:SW-GV-complexity}%
\end{equation}

\end{proposition}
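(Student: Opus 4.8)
The plan is to reduce the quantum distance problem to a minimum-weight search over a classical-style $\mathbb{F}_4$ code and then run the sliding-window algorithm of \cite{Evseev-1983,Dumer-1996} on it, accounting for the features specific to stabilizer codes. By the trace-inner-product map recalled in the Background, a minimum-weight Pauli operator $U\notin\mathscr{S}$ that commutes with every stabilizer generator corresponds to a minimum quaternary-weight vector $\mathbf{e}\in\mathbb{F}_4^{\,n}$ with $H\ast\mathbf{e}=0$ that is not contained in the $\mathbb{F}_2$-span of the rows of $H$. The set of all $\mathbf{e}$ with $H\ast\mathbf{e}=0$ is the normalizer code $\mathcal{N}$, an additive code over $\mathbb{F}_4$ of size $2^{\,n+k}$; the degeneracy group $\mathcal{S}=\langle\text{rows of }H\rangle$ is a subcode of size $2^{\,n-k}$. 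The task becomes: find a vector of smallest quaternary weight in $\mathcal{N}\setminus\mathcal{S}$. No randomness assumption is needed for the algorithm; for a random stabilizer code this minimum weight equals $\delta n$ with $\delta$ fixed by \eqref{eq:GV}.

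Next I would describe the algorithm itself. Fix a trial weight $w$ (raised in unit steps, or binary-searched, at a polynomial cost) and slide a window $W$ of size $g$ over the $n$ quaternary coordinates. In each window position, enumerate all patterns of quaternary weight at most a threshold $t$; for each such pattern, reconstruct the vector(s) of $\mathcal{N}$ that agree with it on $W$ by solving a linear system over $\mathbb{F}_2$ — the window size $g$ is chosen so that this reconstruction is essentially unique, i.e. so that $\mathbf{e}\mapsto\mathbf{e}|_W$ is injective on $\mathcal{N}$ — discard reconstructions lying in $\mathcal{S}$ (a rank test), and keep the vector of least total weight seen over all windows and patterns.

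Correctness follows from a covering/pigeonhole argument: if the true minimum weight is $w=\delta n$, then as $W$ runs over windows that together cover all $n$ coordinates, at least one window contains at most a $\delta$-fraction of the support of a fixed minimum-weight $\mathbf{e}\in\mathcal{N}\setminus\mathcal{S}$; with $t$ set accordingly the pattern $\mathbf{e}|_W$ is enumerated, its reconstruction equals $\mathbf{e}$ itself (not in $\mathcal{S}$), and it is retained. The running time is then (number of window positions, polynomial) $\times$ (patterns per window) $\times\,\mathrm{poly}(n)$ for reconstruction and the degeneracy test. The number of length-$g$ quaternary patterns of relative weight $\delta$ is $4^{\,H_4(\delta)\,g\,(1+o(1))}$, and with $g$ and $t$ dictated by the size $2^{\,n+k}$ of $\mathcal{N}$ this yields the complexity exponent $F_{Aq}=(1+R)H_4(\delta)$ of \eqref{1}. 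Substituting the GV relation $H_4(\delta)=(1-R)/2$, read off from \eqref{eq:GV}, gives $F_{Aq}^{\ast}=(1+R)(1-R)/2=(1-R^2)/2$, which is \eqref{eq:SW-GV-complexity}.

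The main obstacle is calibrating the window size $g$: it must be large enough that each enumerated window-pattern extends to (essentially) one element of $\mathcal{N}$ — otherwise each pattern seeds an exponential family of completions and a nested decoding step is required — yet the enumeration cost grows with $g$, so the exponent in \eqref{1} is the outcome of this trade-off. Getting the constant exactly right also requires care with the quantum-specific features: $\mathcal{N}$ is only additive, not $\mathbb{F}_4$-linear; the quaternary weight is not the Hamming weight of the underlying $2n$-bit string; and the degeneracy subcode $\mathcal{S}$ must be excluded without inflating the exponent. I would finally note that the bound is the worst-case guarantee of the generic technique, so no code-specific structure is used beyond the parameters $n$, $k$.
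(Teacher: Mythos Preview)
Your proposal is correct and follows essentially the same route as the paper: map to the additive $\mathbb{F}_4$ normalizer code $\mathcal{N}$ of size $2^{\,n+k}$ (the paper phrases this as an ``effective rate'' $R'=(1+R)/2$), slide a window of size $g\approx R'n$, enumerate the $\sim 4^{H_4(\delta)g}$ quaternary patterns of relative weight $\delta$, reconstruct and discard degenerate vectors, then substitute the GV relation. One small correction: contrary to your remark that ``no randomness assumption is needed for the algorithm,'' the paper uses the randomness of the code precisely at the step you flag as the main obstacle --- it is what guarantees that any \emph{consecutive} window of size $R'n+o(n)$ already gives an injective restriction on $\mathcal{N}$ (an information set), which is exactly what distinguishes the sliding-window technique from the random-window one.
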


\begin{proof}
SW technique uses only $k+o(n)$ consecutive positions to recover a  codeword
of a $q$-ary linear $[n,k]$ code. For example, any $k$  consecutive positions
suffice in a cyclic code. It is also easy to  verify that in most (random)
$k\times n$ generator matrices $G$ any  $s=k+2\left\lfloor \log_{q}
n\right\rfloor $ consecutive columns form a submatrix $G_{s}$ of a  maximum
rank $k$. Thus, in most random $[n,k]$ codes, a codeword can  be recovered by
encoding its $s$ (error free) consecutive bits. To  find a codeword $c$ of any
given weight $w$, we choose a sliding  window $I(i,s)$ of length $s$ that
begins in position  $i=0,\ldots,n-1$. Note that a sliding window can change
its weight  only by one when it moves from any position $i$ to $i+1$; thus at
least one of the $n$ windows will have the average Hamming weight
$v\equiv\left\lfloor ds/n\right\rfloor $. Our algorithm takes all  possible
positions $i$ and weights $w=1,2,\ldots$. We assume that  the current window
$I(i,s)$ is corrupted in $v$ positions and encode  all
\begin{equation}
L=(q-1)^{v}\textstyle{\binom{s}{v}} \label{eq:sliding-wind}%
\end{equation}
vectors of length $s$ and weight $v$. Procedure stops for some $w$ once we
find an encoded codeword $c$ of weight $w$. Finally, such vector $c$ is tested
on linear dependence with the rows of the parity check matrix $H$. This gives
the overall SW-complexity of order $Ln^{2}$ with complexity exponent
$F_{A}=RH_{q}(\delta)$.

To apply SW procedure to a (degenerate) quantum code, note that an $[[n,k]]$
stabilizer code is related to some additive quaternary code that is defined in
a space of $4^{n}$ vectors and has only $2^{n-k}=4^{r/2}$ distinct syndromes,
where $r\equiv n-k$ is the redundancy of the quantum code. Thus, the effective
rate is\footnote{This construction is analogous to pseudogenerators
introduced in Ref.~\cite{White-Grassl-2006}.} $R^{\prime}=(n-r/2)/n=(1+R)/2$,
which gives binary complexity exponent (\ref{1}). Finally,
estimate~(\ref{eq:SW-GV-complexity}) follows from~(\ref{eq:GV}).
\end{proof}

Note that classical codes that meet the GV bound $R=1-H_{q}(\delta)$ have
complexity exponent $F_{A}^{\ast}=R(1-R)$ that achieves its maximum $1/4$ at
$R=1/2$. By contrast, quantum codes achieve maximum complexity $F_{Aq}^{\ast
}(R)$ at the rate $R=0$. Note also that quantum codes of low rate $R$ and
small relative distance $\delta$ have complexity exponent logarithmic in
$\delta$.

\begin{figure}[ptbh]
\centering
\includegraphics[width=0.8\columnwidth]{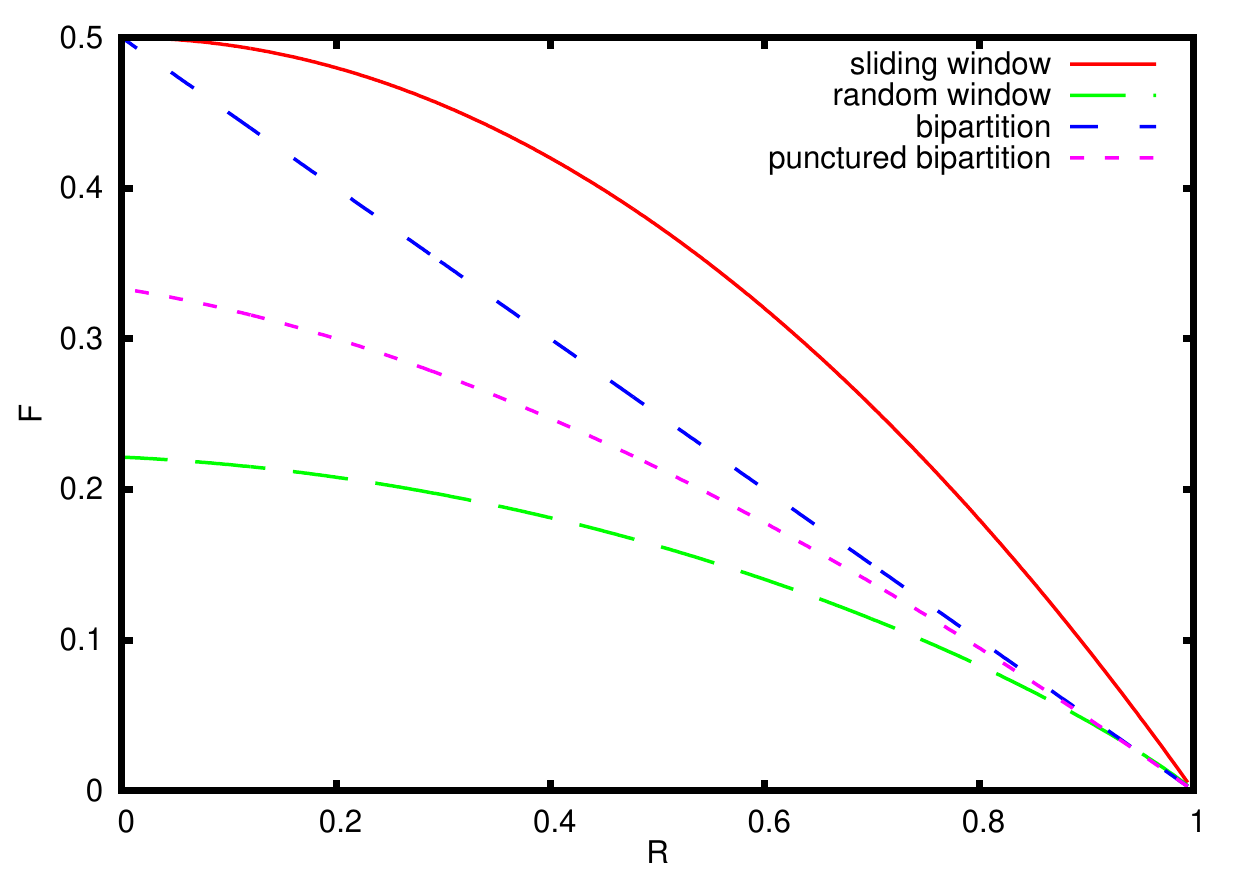}\caption{Comparison of the binary
complexity exponents for the four classical decoding techniques applied to
quantum codes at the quantum GV bound, see Sec.~\ref{sec:generic}. Note that
for high-rate codes, $R\to1$, the curves for the sliding window and the random
window techniques have logarithmically-divergent slopes, while the slopes for
the two other techniques remain finite. In this limit of $R\to1$ the punctured
bipartition technique gives the best performance.}%
\label{fig:cmp}%
\end{figure}

\subsection{Random window (RW) technique\cite{Leon-1988,
Kruk-1989,Coffey-Goodman-1990}}

\begin{proposition}
Code distance $\delta n$ of a random quantum stabilizer code $[[n,Rn]]$ can be
found with complexity exponent%
\begin{equation}
F_{Bq}=H_{2}(\delta)-\textstyle\left(  \frac{1-R}{2}\right)  H_{2}\left(
\frac{2\delta}{1-R}\right)  \label{3}%
\end{equation}

\end{proposition}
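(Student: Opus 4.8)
The plan is to mirror the quantization trick of the previous proposition: run the classical random-window (information-set) algorithm of Refs.~\cite{Leon-1988,Kruk-1989,Coffey-Goodman-1990} on the quaternary additive code associated with the stabilizer, for which the effective rate is again $R'=(1+R)/2$. First I would recall the classical procedure for producing a weight-$w$ codeword of a $q$-ary $[n,k]$ code with $w=\delta n$: repeatedly draw a uniformly random set $I$ of $k$ coordinates, bring the generator matrix to systematic form with respect to $I$ (for a random code the corresponding $k\times k$ minor is nonsingular with probability bounded away from $0$, so this costs only a polynomial-factor overhead), and enumerate the $O(k)$ codewords whose restriction to $I$ has Hamming weight one; the sought codeword appears on the first iteration for which $I$ contains all but one of its support positions. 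Exactly as in the proof of the previous proposition, any candidate of the target weight is finally tested for linear dependence on the rows of $H$, so that degeneracy-group vectors are discarded.

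Next I would estimate the per-iteration success probability. A uniformly random $k$-subset meets the fixed $w$-element support of the target codeword in exactly one position with probability $w\binom{n-w}{k-1}/\binom{n}{k}$, which by the identity $\binom{n}{w}\binom{n-w}{k}=\binom{n}{k}\binom{n-k}{w}$ equals $\binom{n-k}{w}/\binom{n}{w}$ up to a polynomial factor. Hence the expected number of iterations is $\binom{n}{w}/\binom{n-k}{w}$ to within subexponential corrections, and using $\binom{n}{\delta n}=2^{nH_2(\delta)+o(n)}$ together with $\binom{(1-R)n}{\delta n}=2^{(1-R)nH_2(\delta/(1-R))+o(n)}$ one obtains the classical complexity exponent $F_B=H_2(\delta)-(1-R)H_2(\delta/(1-R))$, valid when $\delta\le 1-R$.

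Finally I would carry out the quantization. As recalled in the proof of the previous proposition, an $[[n,Rn]]$ stabilizer code corresponds to an additive quaternary code with $2^{n-k}=4^{r/2}$ distinct syndromes, $r\equiv n-k$, so the effective redundancy in quaternary units is $r/2$ and the effective rate is $R'=(1+R)/2$; consequently $1-R'=(1-R)/2$ and $\delta/(1-R')=2\delta/(1-R)$. Substituting $R\to R'$ in $F_B$ then yields the claimed exponent~(\ref{3}). The enumeration step still contributes only polynomially, since fixing a single nonzero quaternary symbol on the information set costs a constant factor, so no $H_4$ term enters.

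The hard part will not be any single calculation but the care needed to make the probability estimate rigorous: one must argue that, for a \emph{random} stabilizer code, the random information set behaves like a uniformly random $k$-subset relative to the unknown support of the minimum-weight operator, so that $\binom{n-k}{w}/\binom{n}{w}$ is tight to subexponential order, and that enumerating only weight-one patterns on $I$ (rather than a balanced weight $p>1$) suffices in this regime to leave the exponent unchanged. One must also track the admissibility range: the estimate~(\ref{3}) is meaningful only for $2\delta\le 1-R$, i.e.\ on the portion of the quantum GV curve~(\ref{eq:GV}) where the argument of the binary entropy stays in $[0,1]$.
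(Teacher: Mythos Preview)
Your proposal is correct and follows essentially the same route as the paper: describe the classical random-window (information-set) algorithm, show that the expected number of trials scales as $\binom{n}{w}/\binom{n-k}{w}$ to polynomial factors, read off $F_B=H_2(\delta)-(1-R)H_2(\delta/(1-R))$, and then pass to the stabilizer setting via the effective rate $R'=(1+R)/2$. The only cosmetic differences are that the paper takes a window of size $s=k+\tau$ with $\tau=o(k)$ (so that $G_s$ has full rank with probability $1-q^{-\tau}$ rather than merely a constant bounded away from~$0$) and phrases the trial count through the covering number $T(n,s,w)$ with the Erd\H{o}s-type bound $\binom{n}{w}/\binom{n-s}{w}\le T(n,s,w)\le \binom{n}{w}/\binom{n-s}{w}\bigl(1+\ln\binom{n-s}{w}\bigr)$ instead of your direct hypergeometric estimate; both lead to the same exponent.
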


\begin{proof}
Given a random $q$-ary linear $[n,k]$ code, we randomly choose  $s=k+\tau$
positions, where $\tau=o(k)$ is some small positive  number, e.g., $\tau
\sim\log_{2} k$. We wish to find an $s$-set  of weight $t=1$ in some unknown
codeword of weight $w$. Let  $M(n,s,w)$ denote the number of random trials
needed to find such  a set with a high probability $1-e^{-n}$. Also, let
$T(n,s,w)$ be  the minimum number of $(n-s)$-sets needed to necessarily cover
any  (unknown) $w$-set. It is easy to check\cite{Erdos-book} that
\begin{equation}
\textstyle{\binom{n}{w}}/{\binom{n-s}{w}}\leq T(n,s,w)\leq\textstyle{\binom
{n}{w}}/{\binom{n-s}{w}}(1+\ln{\binom{n-s}{w}}) \label{eq:112}%
\end{equation}
and that $M(n,s,w)\leq T(n,s,w)n\ln n$. Below $w=1,2,\ldots$.

RW-algorithm performs $nT(n,s,w)$ trials of choosing $s$ random positions.
Each trial gives a random $k\times s$ submatrix $G_{s}$ of a (random)
generator matrix $G$. It is easy to verify that $G_{s}$ has full rank $k$ with
a high probability $1-q^{-\tau}$ (also, most matrices $G$ have \textit{all
possible } submatrices $G_{k}$ of rank $k-n^{1/2}$ or more$)$. Thus, a typical
$s$-set has a subset of $k$ information bits. If the current $s$-set includes
such a subset, we consider $s$ vectors $(0\ldots010\ldots0)$ of weight $t=1$
and re-encode them into the codewords of length $n$. Otherwise, we discard the
$s$-set and proceed further. Algorithm stops once we obtain a codeword of
weight $w$. The overall complexity has the order of $n^{4}T(n,s,w)$ with the
binary complexity exponent
\[
F_{B}=H_{2}(\delta)-(1-R)H_{2}(\delta/(1-R)).
\]
For stabilizer codes, we obtain (\ref{3}) using their effective rate
$R^{\prime}=(1+R)/2$.
\end{proof}

Quantum codes with small distances $w\leq(n-k)^{1/2}$ and $s\sim nR^{\prime}$
meet an exponentially tight bound
\[
\log_{2}T(n,s,w)\sim\log_{2}\textstyle\left(  \frac{n}{n-s}\right)  ^{w}\sim
w-w\log_{2}(1-R)
\]
Exponent (\ref{3}) can be further specified if codes meet the quantum GV bound
(\ref{eq:GV}). The corresponding exponent $F_{Bq}^{\ast}$ reaches its maximum
$F_{\mathrm{max}}\approx0.22$ at $R=0$ [Fig.~\ref{fig:cmp}]. By contrast,
bi\-na\-ry linear codes give exponent $F_{B}^{\ast}=(1-R)\bigl[ 1-H_{2}\left(
\delta/(1-R)\right)  \bigr] $ that achieves its maximum $0.12$ at
$R\approx1/2$.

\subsection{Bipartition match (BM) technique\label{sec:bipartition}%
\cite{Dumer-1989}}

\begin{proposition}
Code distance $\delta n$ of any quantum stabilizer code $[[n,Rn]]$  can be
found with complexity exponent%
\begin{equation}
F_{Cq}=H_{4}(\delta). \label{4}%
\end{equation}
For random stabilizer codes that meet the GV bound (\ref{eq:GV}),
\begin{equation}
F_{Cq}^{\ast}=(1-R)/2. \label{5}%
\end{equation}

\end{proposition}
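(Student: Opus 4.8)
The plan is to adapt the classical bipartition (``meet-in-the-middle'') algorithm of~\cite{Dumer-1989} to the additive $\mathbb{F}_{4}$ description of the stabilizer code, and then to pass from the quaternary to the binary complexity exponent. First I would recall the classical construction. Split the $n$ coordinates into two halves and write the parity-check matrix correspondingly as $H=(H'\,|\,H'')$; a vector $\mathbf{c}=(\mathbf{c}',\mathbf{c}'')$ is a codeword iff $H'\mathbf{c}'=-H''\mathbf{c}''$. For a trial weight $w=1,2,\ldots$ one enumerates all $\mathbf{c}'$ of weight at most $\lceil w/2\rceil$ on the first half, stores the pairs $(H'\mathbf{c}',\mathbf{c}')$ in a table sorted by the partial syndrome $H'\mathbf{c}'$, then enumerates all $\mathbf{c}''$ of weight at most $\lceil w/2\rceil$ on the second half and looks up $-H''\mathbf{c}''$ in this table; every hit reconstructs a genuine codeword of weight $\le w$, and the sweep stops at the smallest $w$ for which a hit of weight exactly $w$ is produced. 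To guarantee that the \emph{minimum}-weight codeword is seen, I would observe that under a random bipartition a fixed weight-$w$ codeword splits into halves of weights $\{\lfloor w/2\rfloor,\lceil w/2\rceil\}$ with probability bounded below by $1/\mathrm{poly}(n)$, so $\mathrm{poly}(n)$ independent choices of the split suffice; this only multiplies the running time by a polynomial, as in~\cite{Dumer-1989}.

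Next I would do the complexity count. On each half (of length $n/2$) the number of $q$-ary vectors of weight $\le\lceil w/2\rceil$ equals, to exponential accuracy, $\binom{n/2}{w/2}(q-1)^{w/2}=q^{(n/2)H_{q}(\delta)(1+o(1))}$, because $(w/2)/(n/2)=\delta$ and $\delta$ lies below the maximizing point of $H_{q}$ for the code families of interest. Both the time (building the sorted table and performing the lookups) and the space are of this order up to $\mathrm{poly}(n)$ factors, so the $q$-ary complexity exponent of the classical algorithm is $H_{q}(\delta)/2$. The essential feature, in contrast with the sliding- and random-window techniques, is that this exponent does not depend on the rate $R$; hence the resulting bound will hold for \emph{any} stabilizer code, not only for random ones.

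Finally I would carry out the quantum adaptation. Map the $[[n,Rn]]$ code to its additive code over $\mathbb{F}_{4}$ with parity-check matrix $H$ whose rows have pairwise-zero trace inner product, so that the objects we minimize over live in $\mathbb{F}_{4}^{n}$; run the bipartition algorithm with $q=4$, using the $\mathbb{F}_{2}$-valued trace-inner-product constraints in place of $H'\mathbf{c}'=-H''\mathbf{c}''$, and reject any produced vector that lies in the row space of $H$ (a polynomial-time test) so that the degeneracy group is excluded and the reported weight is the genuine quantum distance. Since the exponent is rate-independent, no effective-rate substitution is needed here; one only converts the quaternary exponent $H_{4}(\delta)/2$ to the binary exponent by multiplying by $\log_{2}4=2$, which gives $F_{Cq}=H_{4}(\delta)$, i.e.~(\ref{4}); specializing to the quantum GV bound~(\ref{eq:GV}), the relation $R=1-2H_{4}(\delta)$ yields $H_{4}(\delta)=(1-R)/2$, which is~(\ref{5}).

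The step I expect to be the real obstacle is the correctness/normalization bookkeeping rather than any asymptotics: guaranteeing that the randomized bipartition actually exposes the minimum-weight operator with only the claimed polynomial overhead, and handling the exclusion of the degeneracy group cleanly so that the output equals the quantum distance $d=\delta n$ and not merely the minimum additive weight. The entropy estimates themselves are the same routine binomial asymptotics already used in the previous two propositions.
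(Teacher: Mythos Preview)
Your proposal is correct and follows the same bipartition (meet-in-the-middle) idea as the paper, with the same entropy count and the same conversion $H_{4}(\delta)/2\to H_{4}(\delta)$ from the quaternary to the binary exponent. The only substantive difference is how the balanced split is guaranteed: you randomize the bipartition and repeat $\mathrm{poly}(n)$ times, relying on the $\Omega(1/\sqrt{n})$ probability that a fixed weight-$w$ word lands with $\lfloor w/2\rfloor$ ones in one half; the paper instead uses a \emph{deterministic} sliding window of length $\lfloor n/2\rfloor$ over all $n$ cyclic starting positions, noting that the window weight changes by at most one per shift and hence equals $\lfloor w/2\rfloor$ for some position. Both devices contribute only polynomial overhead, so the exponents coincide; the paper's variant has the extra virtue of being fully deterministic, which is what supports its claim that the BM technique works for \emph{any} stabilizer code rather than only for random ones.
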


\begin{proof}
We use a sliding (\textquotedblleft left\textquotedblright) window  of length
$s_{l}=\left\lfloor n/2\right\rfloor $ starting in any  position $i$. For any
unknown vector of weight $w$, at least one  position $i$ produces a window of
the average weight (down to the  closest integer) $v_{l}=\left\lfloor
w/2\right\rfloor $. The  remaining (right) window of length $s_{r}=\left\lceil
n/2\right\rceil $ will have the weight $v_{r}=\left\lceil w/2\right\rceil $.
We calculate the syndromes of all vectors  $e_{l}$ and $e_{r}$ of weights
$v_{l}$ and $v_{r}$ on the left and  right windows, respectively, and try to
find two vectors  $\{e_{l},e_{r}\}$ that give identical syndromes, and
therefore form  a codeword. Clearly, each set $\{e_{l}\}$ and $\ \{e_{r}\}$
have  size of order $L=(q-1)^{w/2}{\binom{n/2}{w/2}}$. Finding two  elements
$e_{l}$, $e_{r}$ with equal syndromes requires complexity  of order $L\log
_{2}L$, by sorting the elements of the combined  set. Thus, finding a code
vector of weight $w=\delta n$ in any  classical code requires complexity of
order $q^{F_{C}n}$, where  $F_{C}=H_{q}(\delta)/2$. For binary codes on the GV
bound,  $F_{C}^{\ast}=(1-R)/2$. The arguments of the previous propositions
then  give exponents (\ref{4}) and (\ref{5}) for stabilizer codes.
\end{proof}

Note that BM-technique works for any linear code, unlike two previous
techniques provably valid for random codes. It is also the only technique that
can be transferred to quantum codes without any performance loss. Note also
that $F_{Cq}^{\ast}$ is always below the SW exponent $F_{Aq}^{\ast}$, and is
below the RW exponent $F_{Bq}^{\ast}$ for very high rates. This is due to the
fact that $F_{Bq}^{\ast}\approx1-R$ for $R\rightarrow1$, and is twice the
value of $F_{Cq}^{\ast}$.

\subsection{Punctured bipartition technique \cite{Dumer-2001}}

\begin{proposition}
Code distance $\delta n$ of a random quantum stabilizer code $[[n,Rn]]$ can be
found with complexity exponent%
\begin{equation}
F_{Dq}=\textstyle{\frac{2(1+R)}{3+R}}H_{4}(\delta)
\label{eq:punctured-complexity}%
\end{equation}
For random stabilizer codes that meet the GV bound (\ref{eq:GV}),
\begin{equation}
F_{Dq}^{\ast}=(1-R^{2})/(3+R) \label{eq:punctured-GV}%
\end{equation}

\end{proposition}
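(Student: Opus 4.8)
The plan is to follow the same two-step route as in Propositions~1--3: first obtain the complexity exponent of the punctured-bipartition technique of Ref.~\cite{Dumer-2001} for a generic $q$-ary linear $[n,k=Rn]$ code, then transfer it to the quantum setting through the effective rate $R'=(1+R)/2$ of the associated additive quaternary code and the conversion from quaternary to binary complexity exponents (a factor $\log_2 4=2$) already used above. The classical technique is a hybrid of Propositions~1 and~3: as in the sliding-window method it confines the search to a window of $\Theta(n)$ consecutive positions, and as in the bipartition-match method it splits that window and locates the codeword by a meet-in-the-middle collision search; the new element is that the matching is carried out against the $s-k$ parity checks of the \emph{punctured} code supported on the window.

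Concretely, I would fix a target weight $w=\delta n$ and a window $I$ of $s$ consecutive coordinates, with $s>k$ to be chosen later. By the sliding (averaging) argument of Proposition~1, some window position makes the restriction to $I$ of any weight-$w$ codeword have weight $v=\lfloor\delta s\rfloor$; sliding the split point inside $I$ as in Proposition~3, this restriction further has weight $v/2$ on each half $I_1,I_2$. For a random generator matrix and $s>k$ the window contains an information set, so the punctured code $\mathcal{C}|_I$ is an $[s,k]$ code on which the restriction map is a bijection, and membership in $\mathcal{C}|_I$ is described by $\Delta=s-k$ linear checks, which I split as $Ax_1+Bx_2=0$ with $x_i$ supported on $I_i$. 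One then enumerates all $\binom{s/2}{v/2}(q-1)^{v/2}\sim q^{H_q(\delta)s/2}$ weight-$(v/2)$ vectors $x_1$ on $I_1$, sorted by $Ax_1\in\mathbb{F}_q^{\Delta}$, does the same for the vectors $x_2$ on $I_2$ keyed by $-Bx_2$, and reads off all collisions; each collision $x_1+x_2$ is a weight-$v$ codeword of $\mathcal{C}|_I$, lifts to a unique codeword $c$ of $\mathcal{C}$ by solving a linear system in $\mathrm{poly}(n)$ time, and is kept iff $\mathrm{wt}(c)=w$.

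The running time is of order $n^{O(1)}(q^{H_q(\delta)s/2}+q^{H_q(\delta)s-(s-k)})$, the second term counting the false collisions that must be verified. Choosing the window length $s\approx 2Rn/(1+R)$, for which the collision-verification term is no larger than the enumeration term (the two exponents coincide on the GV bound, where $H_q(\delta)=1-R$), gives the classical exponent $F_D=\tfrac{R}{1+R}H_q(\delta)$. For the quantum code one repeats the construction over $\mathbb{F}_4$: the $[[n,Rn]]$ stabilizer code has only $4^{(n-k)/2}$ distinct syndromes, i.e.\ effective rate $R'=(1+R)/2$, so $F_{Dq}=2\,\tfrac{R'}{1+R'}H_4(\delta)=\tfrac{2(1+R)}{3+R}H_4(\delta)$, which is~(\ref{eq:punctured-complexity}); substituting the quantum GV bound~(\ref{eq:GV}) in the form $H_4(\delta)=(1-R)/2$ yields $F_{Dq}^{\ast}=\tfrac{2(1+R)}{3+R}\cdot\tfrac{1-R}{2}=\tfrac{1-R^{2}}{3+R}$, which is~(\ref{eq:punctured-GV}).

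The main obstacle is the window-length optimization --- showing that the enumeration/sorting cost and the cost of processing all spurious collisions can be controlled simultaneously --- and this is exactly where the redundancy $\Delta=s-k$ of the punctured code and the bound $H_q(\delta)\le 1-R$ come in. A secondary point, shared with Propositions~1 and~2, is genericity of the parity-check matrix: one needs every length-$\Theta(n)$ window with $s>k$ to contain an information set, so that the lift of a punctured-code collision is unique and computable in polynomial time, and one needs the $\Delta$ checks to have full rank --- which is why the statement is restricted to random stabilizer codes.
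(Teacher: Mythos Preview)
Your proposal is correct and follows essentially the same route as the paper: a sliding window of length $s=\lceil 2nR/(1+R)\rceil$ combined with bipartition matching on the punctured $[s,k]$ code, the observation that the number of matching pairs stays of order $L_s$ for this choice of $s$ (your collision term $q^{H_q(\delta)s-(s-k)}$ is exactly the paper's count of codewords of weight $v$ in the punctured code), yielding the classical exponent $F_D=H_q(\delta)R/(1+R)$, and then the substitution $R\to R'=(1+R)/2$ together with the base-$4$ to base-$2$ conversion to obtain~(\ref{eq:punctured-complexity}) and~(\ref{eq:punctured-GV}). Your discussion of the window-length optimization and of the genericity requirement (information set inside the window) is slightly more explicit than the paper's, but the argument is the same.
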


\begin{proof}
We now combine the SW and BM techniques and consider a sliding  window of
length $s=\left\lceil 2nR/(1+R)\right\rceil $ that exceeds  $k$ by a factor of
$2/(1+R)$. It is easy to verify that most random  $[n,k]$ codes include at
least one information $k$-set on any  sliding $s$-window $I(i,s)$. Thus, any
such window forms a punctured  linear $[s,k]$ code with a smaller redundancy
$s-k$. Also, any  codeword of weight $w$ has the average weight
$v=\left\lfloor ws/n\right\rfloor $ in one or more sliding windows. For
simplicity, let $s$ and $v$ be even. We then use bipartition on each
$s$-window and consider all vectors $e_{l}$ and $e_{r}$ of weight  $v/2$ on
each half of length $s/2$. The corresponding sets  $\{e_{l}\}$ and $\{e_{r}\}$
have size $L_{s}=(q-1)^{v/2} \left(  _{v/2}^{s/2}\right)  $. We then seek all
matching pairs  $\{e_{l} ,e_{r}\}$ that have the same syndrome $h$. Each such
pair  $\{e_{l},e_{r}\}$ represents some code vector of the punctured  $[s,k]$
code and is re-encoded to the full length $n$. For each  $w=1,2,\ldots$, we
stop the procedure once we find a re-encoded vector  of weight $w$. Thus, we
use $[s,k]$ punctured codes and lower  BM-complexity to the order $L_{s}$.

However, it can be verified that some (short) syndrome $h$ of size  $s-k$ can
appear in many vectors $e_{l}$ and $e_{r}$ of length  $s/2$, unlike the
original BM-case. It turns out \cite{Dumer-2001}  that our choice of parameter
$s$ limits the number of such  combinations $e_{l}, e_{r}$ to the above order
$L_{s}$. Thus, we  have to encode \textit{all }$L_{s}$ \textit{code vectors}
of weight  $v$ in a random $[s,k]$ code. The end result is a smaller
complexity  of order $L_{s}=q^{F_{D}n}$, where
\[
F_{D}=H_{q}(\delta)R/(1+R).
\]
Transition from classical codes to quantum codes does not affect
BM-complexity. However, our sliding algorithm again depends on the effective
quantum code rate $R^{\prime}=(1+R)/2$. This changes exponent $F_{D}$ for
classical codes to exponent (\ref{eq:punctured-complexity}) for stabilizer
codes. Quantum GV bound (\ref{eq:GV}) gives (\ref{eq:punctured-GV}).
\end{proof}

For random codes of high rate $R\rightarrow1$ that meet the GV bound, this
technique gives the lowest known exponents $F_{Dq}^{\ast}$ (for stabilizer
codes) and $F_{D}^{\ast}=R(1-R)/(1+R)$ (for binary codes). However, it cannot
be provably applied to any linear code, unlike a simpler bipartition
technique. Finally, the above propositions can be applied to a narrower class
of the Calderbank-Shor-Steane (CSS) codes. Here a parity check matrix is a
direct sum $H=G_{x}\oplus\omega G_{z}$, and the commutativity condition
simplifies to $G_{x}G_{z}^{T}=0$. A CSS code with $\mathrm{rank}%
\mathop G_{x}=\mathrm{rank}\mathop G_{z}=(n-k)/2$ has the same effective rate
$R^{\prime}=(1+R)/2$ since both codes include $k^{\prime}=n-(n-k)/2=(n+k)/2$
information bits. It is readily verified that CSS codes have binary complexity
exponents $F(R,\delta)$ given by expressions similar to Eqs.\ (\ref{1}),
(\ref{3}), (\ref{4}), (\ref{eq:punctured-complexity}), where one must
substitute $H_{4}(x)$ with $H_{2}(x)/2$.

\section{Linked-cluster technique}

Let $\Psi(s,\ell)$ be an ensemble of \emph{regular} $(s,\ell)$ LDPC codes, in
which every column and every row of matrix $H$ has weight $s$ and $\ell$
respectively. The following technique is designed as an alternative to the BP
technique used in \cite{Hu-Fossorier-Eleftheriou-2004} to find code distance.
First, note that with quantum codes, BP can yield decoding
failures\cite{Poulin-Chung-2008}, while our setting requires error-free
guarantee. The second, more important, reason is that we consider very
specific, self-orthogonal LDPC codes that can be used in quantum setting.
These self-orthogonal codes represent very atypical elements of $\Psi(s,\ell)$
and can have drastically different parameters. In particular, the existing
constructions of such codes have low distance $d$, where $\log d\sim(\log
n)/2$, whereas a typical $(s,\ell)$-code has a linearly growing distance.
Thus, we consider the worst-case scenario in $\Psi(s,\ell)$, which can be
provably applied to any code.

For an $(s,\ell)$-code, we represent all (qu)bits as nodes of a graph
$\mathcal{G}$ with vertex set $V(\mathcal{G})$ and connect two nodes iff there
is a parity check that includes both positions. A codeword $\mathbf{c}$ is
defined by its support $\mathcal{E}\subseteq V(\mathcal{G})$ and induces the
subgraph $\mathcal{G}(\mathcal{E})$ that forms one or more \textit{clusters}
and has no edges outside of $\mathcal{G}(\mathcal{E})$. Generally, we will
make no distinction between the set $\mathcal{E}$ and the corresponding
subgraph. Note that disconnected clusters affect disjoint sets of the parity
checks. This implies

\begin{lemma}
[Lemma 1 from Ref.\ \cite{Kovalev-Pryadko-FT-2013}] \label{th:basic} A
minimum-weight code word of a $q$-ary linear code  forms a linked cluster on
$\mathcal{G}$.
\end{lemma}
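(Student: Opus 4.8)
The plan is to argue by contradiction, exploiting the fact that the parity-check structure of the code is respected independently on each connected component of the induced subgraph. Suppose $\mathbf{c}$ is a minimum-weight codeword whose support $\mathcal{E}$ induces a subgraph $\mathcal{G}(\mathcal{E})$ that is \emph{not} connected, so that $\mathcal{E}$ splits as a disjoint union $\mathcal{E}=\mathcal{E}_1\sqcup\mathcal{E}_2$ with no edges of $\mathcal{G}$ between $\mathcal{E}_1$ and $\mathcal{E}_2$. I would then let $\mathbf{c}_1$ and $\mathbf{c}_2$ be the restrictions of $\mathbf{c}$ to $\mathcal{E}_1$ and $\mathcal{E}_2$ respectively (zero outside), so $\mathbf{c}=\mathbf{c}_1+\mathbf{c}_2$ and, since $\mathbf{c}$ has minimum weight, both $\mathbf{c}_1$ and $\mathbf{c}_2$ are nonzero with strictly smaller weight.

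The key step is to show that each $\mathbf{c}_i$ is itself a codeword, i.e.\ $H\mathbf{c}_i=0$. Fix a row $h$ of $H$, i.e.\ a parity check. By the definition of $\mathcal{G}$, the set of positions touched by $h$ forms a clique in $\mathcal{G}$, hence lies entirely within the closed neighbourhood of any one of its members; in particular, because there are no edges between $\mathcal{E}_1$ and $\mathcal{E}_2$, the support of $h$ cannot meet both $\mathcal{E}_1$ and $\mathcal{E}_2$. So $h$ is supported (within $\mathcal{E}$) in at most one of the two pieces. If $\mathrm{supp}(h)\cap\mathcal{E}\subseteq\mathcal{E}_1$, then $h\cdot\mathbf{c}_2=0$ trivially, and $h\cdot\mathbf{c}_1=h\cdot\mathbf{c}=0$; symmetrically in the other case; and if $h$ touches neither piece then both dot products vanish. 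Ranging over all rows $h$ gives $H\mathbf{c}_1=H\mathbf{c}_2=0$. Thus $\mathbf{c}_1$ is a nonzero codeword of weight strictly less than that of $\mathbf{c}$, contradicting minimality. Hence $\mathcal{G}(\mathcal{E})$ is connected, i.e.\ $\mathbf{c}$ forms a single linked cluster.

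I expect the only subtle point to be the claim that the support of a parity check cannot straddle the two components: this is exactly where the definition ``connect two nodes iff some parity check contains both positions'' is used, and it must be invoked carefully — any check with two positions in distinct components would have created an edge between those components, contradicting disconnectedness. Everything else is bookkeeping: the decomposition $\mathbf{c}=\mathbf{c}_1+\mathbf{c}_2$ over $\mathbb{F}_q$, the fact that zeroing out a nonempty proper subset of the support strictly decreases Hamming weight, and linearity of $H$. Since the lemma is quoted from Ref.~\cite{Kovalev-Pryadko-FT-2013}, one could alternatively just cite it, but the self-contained argument above is short enough to include.
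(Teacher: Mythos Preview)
Your argument is correct and follows exactly the same contradiction strategy as the paper's proof: split a putatively disconnected minimum-weight support into two pieces, observe that each piece independently satisfies all parity checks (the paper just says ``these parts satisfy different parity checks''), and derive a smaller-weight codeword. The only difference is that you spell out explicitly why no check can straddle the two components, which the paper leaves implicit.
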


\begin{proof}
Let a minimum-weight support $\mathcal{E}$ include disconnected  parts, say
$\mathcal{E}_{1}$ and $\mathcal{E}_{2}$. These parts  satisfy different parity
checks. Then vectors generated by  $\mathcal{E}_{1}$ and $\mathcal{E}_{2}$
belong to our code and have  smaller weights. Contradiction.
\end{proof}

\emph{Linked-cluster algorithm. } The following breadth-first
algorithm inspects all fully-linked clusters of a given weight
$w=\delta n$. Let us assume that $j=0,1,\ldots,n-1$ is the
starting position in the support $\mathcal{E}$ of an unknown
codeword of weight $w$. Position $j$ belongs to some $s$ parity
checks which form the list $\eta=\{h_{1},\ldots,h_{s}\}$. To
satisfy the parity-check $h_{1}$, we arbitrarily select some (odd)
number $v_{1}$ of the remaining\ $\ell-1$ parity-check positions
of $h_{1}$. These $v_{1}$ positions are now included in the
current support $\mathcal{E}$. Any time a new position is
selected, we also append  the list $\eta$ with the new parity
checks which include this position. We then proceed with the
subsequent parity-checks $h_{2},h_{3},\ldots$ as follows. Let a
check $h_{i}$  overlap with some of the parity checks
$h_{1},\ldots,h_{i-1}$ in $a_{i}\leq\ell-1$ positions, and let
$b_{i}$ be the number of 1s selected in these $a_{i}$ positions.
Then $h_{i}$ can use only the remaining $\ell-a_{i}$ positions to
pick up some $v_{i}\equiv b_{i}(\bmod\,2)$ positions. If $b_{i}$
is odd, the algorithm adds some $v_{i}\in\{1,3,\ldots\}$ positions
from $h_{i}$, but (temporarily) skips this check if $b_{i}$ is
even. This parity check $h_{i}$ can be re-processed in some later
step $p$ as a parity check $h_{p}$ if the corresponding number
$b_{p}$ is odd. The process is stopped once we add $v=w-1$
positions. The result is a binary codeword with support
$\mathcal{E}$ if all processed odd-type checks are satisfied and
all unprocessed checks have even overlap with $\mathcal{E}$. At
this point, adding some $v_{i}=2,4,..$ symbols in any even-type
check can only increase the weight of a codeword. For a $q$-ary
code, we perform summation over $v_{i}=1,2,\ldots$, and need to
check the rank of a matrix formed by the corresponding $w$ columns
of the check matrix. For a quantum stabilizer code, we also need
to verify that any obtained codeword is linearly independent from
rows of $H$.

At step $i$, there are $\binom{\ell-a_{i}}{v_{i}}$ ways to select
$v_{i}$ positions. Thus, the total number of choices $N_{v}$ to
select $v$ positions is
\[
N_{v}\leq\sum_{m\geq1}\,\,\sum_{v_{i}\in\{1,3,\ldots\}}\delta_{v,v_{1}%
+\ldots+v_{m}}\prod_{i=1}^{m}\textstyle\binom{\ell-a_{i}}{v_{i}}%
\]
which in turn is bounded by
\[
S_{v}(\ell)\equiv\sum_{m\geq1}\,\,\sum_{v_{i}\in\{1,3,\ldots\}}\delta
_{v,v_{1}+\ldots+v_{m}}\prod_{i=1}^{m}\textstyle\binom{\ell-1}{v_{i}}%
\]
Here $\delta_{a,b}$ is the Kronecker symbol, and $m$ is the number of terms in
the decomposition $v=v_{1}+\ldots+v_{m}$.

To estimate $S_{v}(\ell)$, introduce the generating function $g_{\ell}%
(z)=\sum_{v}S_{v}(\ell)z^{v}$. Easy summation gives for $q=2$:
\begin{equation}
g_{\ell}(z)=\textstyle\left(  1-f_{\ell}(z)\right)  ^{-1},\quad f_{\ell
}(z)\equiv\frac{(1+z)^{\ell-1}-(1-z)^{\ell-1}}{2}%
\end{equation}
Finally, we use the contour integration of $g_{\ell}(z)$ to find the
coefficients $S_{v}(\ell)$. Let $\gamma=\sinh^{-1}(1)\approx1.135$ in the case
of a binary code, and $\gamma=1/\ln2\approx1.443$ in the case of a $q$-ary
code. We have:

\begin{proposition}
A codeword of weight $\delta n$ in a $(s,\ell)$-code can be found  with
complexity exponent  $F_{\mathrm{LC}}=\delta\log_{2}(\gamma_{\ell}(\ell-1))$,
where  $\gamma_{\ell} \in(1, \gamma)$ grows monotonically with~$\ell$.
\end{proposition}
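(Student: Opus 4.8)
The plan is to extract the growth rate of the coefficients $S_v(\ell)$ of the generating function $g_\ell(z) = (1-f_\ell(z))^{-1}$ via the standard saddle-point / contour-integration argument, and then feed the resulting per-symbol cost into the complexity bookkeeping already used in the earlier propositions. Concretely, $S_v(\ell) = \frac{1}{2\pi i}\oint g_\ell(z)\,z^{-v-1}\,dz$, and since $g_\ell$ has nonnegative Taylor coefficients, $S_v(\ell) \le g_\ell(\rho)\,\rho^{-v}$ for any $\rho$ in the disk of convergence. The radius of convergence is the smallest positive root $\rho_\ell$ of $f_\ell(\rho)=1$; optimizing the bound over $\rho$ shows the exponential growth rate is exactly $\rho_\ell^{-1}$, so $S_v(\ell) = \rho_\ell^{-v(1+o(1))}$. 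Writing $\rho_\ell^{-1} = \gamma_\ell(\ell-1)$ defines $\gamma_\ell \equiv \rho_\ell^{-1}/(\ell-1)$; then $N_v \le S_v(\ell)$ with $v = w-1 = \delta n - 1$ gives a count of candidate supports of order $(\gamma_\ell(\ell-1))^{\delta n}$, and since each candidate requires only $\mathrm{poly}(n)$ work (rank check of $w$ columns, plus for quantum codes a linear-independence test against the rows of $H$), and there are $n$ starting positions $j$, the total complexity is $\mathrm{poly}(n)\cdot(\gamma_\ell(\ell-1))^{\delta n}$. Taking $\log_2$ and dividing by $n$ yields $F_{\mathrm{LC}} = \delta\log_2(\gamma_\ell(\ell-1))$.

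Next I would pin down the two claimed properties of $\gamma_\ell$: that $\gamma_\ell \in (1,\gamma)$ and that it increases monotonically with $\ell$. For the upper bound I use the elementary estimate $f_\ell(z) = \sum_{k \text{ odd}} \binom{\ell-1}{k} z^k \le \sinh((\ell-1)z)$ for $z>0$ when $q=2$ (since $\binom{\ell-1}{k}\le (\ell-1)^k/k!$), so the root of $\sinh((\ell-1)\rho)=1$, namely $\rho = \sinh^{-1}(1)/(\ell-1) = \gamma^{-1}/(\ell-1)$ with $\gamma = \sinh^{-1}(1)$, is a lower bound for $\rho_\ell$; hence $\rho_\ell^{-1} \le \gamma(\ell-1)$, i.e. $\gamma_\ell \le \gamma$ (strict because the inequality $\binom{\ell-1}{k}\le(\ell-1)^k/k!$ is strict for $k\ge 2$, $\ell \ge 4$). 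In the $q$-ary case one replaces the sum over odd $v_i$ by a sum over all $v_i\ge 1$, getting $f_\ell(z) \le ((1+z)^{\ell-1}-1) \le e^{(\ell-1)z}-1$, whose root of $e^{(\ell-1)\rho}-1=1$ is $\rho = \ln 2/(\ell-1)$, giving the stated $\gamma = 1/\ln 2$. For the lower bound $\gamma_\ell > 1$: at $z = 1/(\ell-1)$ we have $f_\ell(z) < 1$ (check the leading term $\binom{\ell-1}{1}z = 1$ is already the whole first term and the tail is... actually here I need $f_\ell(1/(\ell-1)) < 1$, which requires more care — see below), so $\rho_\ell > 1/(\ell-1)$ and $\gamma_\ell = \rho_\ell^{-1}/(\ell-1) < 1$; wait, that gives the wrong direction, so I will instead argue $\rho_\ell < 1$ directly, or equivalently locate $\rho_\ell$ between $1/(\ell-1)$ and something, and identify which side makes $\gamma_\ell>1$.

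Let me restate the monotonicity and bound argument cleanly: since $f_\ell(z)$ is strictly increasing in both $z>0$ and $\ell$, the root $\rho_\ell$ of $f_\ell(\rho)=1$ strictly decreases in $\ell$, but the scaled quantity $\gamma_\ell = \rho_\ell^{-1}/(\ell-1)$ can still be monotone increasing — this is what the asymptotic $f_\ell(z)\approx \tfrac12 e^{(\ell-1)z}$ for large $\ell z$ suggests, since then $\rho_\ell^{-1}\approx (\ell-1)/\ln 2$ so $\gamma_\ell \to 1/\ln 2 = \gamma$ from below. To prove strict monotonicity of $\gamma_\ell$ I would set $u = (\ell-1)z$ and show that $\phi_\ell(u) \equiv f_\ell(u/(\ell-1))$ is a strictly decreasing sequence of functions of $u$ (for fixed $u>0$), because $\binom{\ell-1}{k}/(\ell-1)^k$ decreases in $\ell$ for each fixed $k \ge 2$ while the $k=1$ term is identically $u$; then the root $u_\ell$ of $\phi_\ell(u_\ell)=1$ increases in $\ell$, and $\gamma_\ell = 1/u_\ell$... which would make $\gamma_\ell$ \emph{decrease} — so I have a sign to chase here.

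The main obstacle is exactly this bookkeeping of which direction the scaled root moves: getting the relation $F_{\mathrm{LC}} = \delta\log_2(\gamma_\ell(\ell-1))$ from the generating function is routine saddle-point analysis, but correctly identifying $\gamma_\ell$, verifying $\gamma_\ell \in (1,\gamma)$, and proving the stated monotonicity requires carefully comparing $f_\ell$ at the right scaling and may hinge on a convexity or log-convexity property of the $\binom{\ell-1}{k}$ that I would need to verify term by term. A secondary subtlety is justifying that $N_v \le S_v(\ell)$ really captures \emph{all} fully-linked clusters (the breadth-first enumeration must be shown to reach every connected support exactly once, or at least at least once, which follows from Lemma~\ref{th:basic} together with the fact that every linked cluster can be built up one parity check at a time starting from any chosen seed position $j$ in its support); I would spell that out before invoking the counting bound.
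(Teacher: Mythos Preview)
Your core argument---extracting the exponential growth rate of $S_v(\ell)$ from the dominant singularity $\rho_\ell$ of $g_\ell(z)=(1-f_\ell(z))^{-1}$ (the smallest positive root of $f_\ell(\rho)=1$), writing $\rho_\ell^{-1}=\gamma_\ell(\ell-1)$, and absorbing the $n$ starting positions and the per-cluster rank/linear-independence checks into a polynomial prefactor---is exactly what the paper does; the paper merely says ``we use the contour integration of $g_\ell(z)$'' and states the result without further detail.

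Your confusion on the side claims about $\gamma_\ell$ is just a sign slip. For fixed $k\ge 2$,
\[
\frac{1}{(\ell-1)^k}\binom{\ell-1}{k}=\frac{1}{k!}\prod_{j=1}^{k-1}\Bigl(1-\frac{j}{\ell-1}\Bigr)
\]
\emph{increases} in $\ell$ (each factor does), so $\phi_\ell(u)\equiv f_\ell\!\bigl(u/(\ell-1)\bigr)$ increases pointwise to $\sinh u$. Hence its root $u_\ell$ \emph{decreases} to $\operatorname{arcsinh}(1)$, and $\gamma_\ell=1/u_\ell$ increases monotonically to $\gamma=1/\operatorname{arcsinh}(1)\approx 1.135$ (the paper's ``$\sinh^{-1}(1)\approx 1.135$'' is this reciprocal, not $\operatorname{arcsinh}(1)\approx 0.881$). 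For the lower bound $\gamma_\ell>1$: at $z=1/(\ell-1)$ the $k=1$ term of $f_\ell$ already equals $1$, and the odd $k\ge 3$ terms are strictly positive once $\ell\ge 4$, so $f_\ell\bigl(1/(\ell-1)\bigr)>1$; thus $\rho_\ell<1/(\ell-1)$ and $\gamma_\ell>1$. With these two corrections your argument goes through and in fact supplies the details the paper omits.
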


More precise estimates of $S_{v}(\ell)$ also give specific numbers
$\gamma_{\ell}$, which can be important for small values of $\ell$.
Finally note that while the cluster technique has high complexity for large
$\ell$ and $\delta$, its exponent $F_{\mathrm{LC}}$ is linear in the relative
distance $\delta$. In comparison, deterministic techniques of
Sec.~\ref{sec:generic} give the higher exponents $F\propto\delta\log
(1/\delta)$ in this limit. All known quantum LDPC codes with limited-weight
stabilizer generators have $\delta\propto n^{-1/2}$, and the linked-cluster
technique gives the lowest complexity for these codes. Note that the RW
technique also gives a linear in $\delta$ exponent $F_{Bq}$ that is bounded by
$\delta-\delta\log_{2}(1-R)$. Our cluster technique still lowers this exponent
$F_{Bq}$ for high code rates $R\geq1-2[\gamma\,(\ell-1)]^{-1}$.

\section{Conclusion}

In this paper, we considered different techniques of finding code distances of
stabilizer quantum codes. For sparse quantum LDPC codes, we proposed a new
cluster-based technique. This technique reduces complexity exponents of the
existing non-probabilistic algorithms for codes with sufficiently small
relative distances. In particular, this is the case for all known families of
quantum LDPC codes that have distances of order $n^{1/2}$ or less.
Cluster-based technique also beats the probabilistic random-window technique
for high-rate codes.\medskip

\textit{Acknowledgment.}
This work was supported in part by the
U.S. Army Research Office under Grant No.\ W911NF-11-1-0027, and by
the NSF under Grant No.\ 1018935.
\bibliographystyle{IEEEtran}
\bibliography{IEEEabrv,lpp,qc_all,more_qc,ldpc,MyBIB}

\end{document}